\newlength\figwidth
\newlength\imagewidth
\definecolor{dgreen}{rgb}{0,.6,0}
\newtheorem{property}{Property}
\begin{document}

\catchline{}{}{}{}{} 

\markboth{C. Li et al.}{Breaking a chaotic image encryption algorithm based on modulo addition and XOR operation}

\title{Breaking a chaotic image encryption algorithm based on modulo addition and XOR operation}

\author{Chengqing Li\textsuperscript{1}\thanks{Corresponding author, chengqingg@gmail.com},
Yuansheng Liu\textsuperscript{1}, Leo Yu Zhang\textsuperscript{2}, and Michael Z. Q. Chen\textsuperscript{3}\\
\textsuperscript{1} College of Information Engineering,\\ Xiangtan University, Xiangtan 411105, Hunan, China\\[0.5em]
\textsuperscript{2} School of Mathematics and Computational Science,\\
Xiangtan University, Xiangtan 411105, Hunan, China\\
\textsuperscript{3} Department of Mechanical Engineering,\\
The University of Hong Kong, Hong Kong
}

\maketitle


\begin{abstract}
This paper re-evaluates the security of a chaotic image encryption algorithm called MCKBA/ HCKBA and finds that it can be broken efficiently with two known plain-images and the corresponding cipher-images. In addition, it is reported that a previously proposed breaking on MCKBA/HCKBA can be further improved by reducing the number of chosen plain-images from four to two. The two attacks are both based on the properties of solving a composite function involving the carry bit, which is composed of the modulo addition and the bitwise OR operations. Both rigorous theoretical analysis and detailed experimental results are provided.
\end{abstract}

\keywords{image encryption; chaos; chosen-plaintext attack; carry bit.}

\section{Introduction}

The subtle similarities between chaos and cryptography make chaos considered as a special way to
design secure and efficient encryption schemes \cite{YaobinMao:CSF2004,ChenJY:Joint:TCSII11}. Meanwhile,
some cryptanalysis work demonstrated that some chaos-based encryption schemes are vulnerable to various
conventional attacks from the viewpoint of modern cryptology \cite{LiShujun:YTSCipher:IEEETCASII2004,Xiao:ImproveTable:TCASII06,SolakErcan:AnaFridrich:BAC10,Solak:multichaotic:OC10}.
In addition, some specific security flaws of chaos-based encryption schemes were reported \cite{ZhouJT:Coder:TCSI11,FChen:PeriodArnold:TIT12}.
\cite{AlvarezLi:Rules:IJBC2006} concluded some general approaches to evaluating security of chaos-based encryption schemes.

Due to the simplicity and low computation complexity of bitwise exclusive OR operation and modulo addition, they are widely used in
traditional text encryption schemes and hash functions. Possible generation of carry bit by the modulo addition
makes the two operations are neither identical nor interchangeable. Some properties existing in multi-round combination of the two
basic operations were derived to facilitate differential attacks on some traditional text encryption schemes or searching collision of
hash functions \cite{Paul:Modulo:LNCS05,Wang:Hash:LNCS05}. Among many chaos-based encryption schemes, the two operations are
the basic involved (even only) substitution functions. In \cite{Li:AttackingISNN2005}, \cite{Li:AttackDSEA2006}, \cite{Li:AttackingRCES2008}, and \cite{Li:AttackingIVC2009}, the following properties about $n$-bit integers $\alpha, \beta, \gamma, x, y$ were found to support or enhance the proposed attacks on the corresponding encryption schemes in turn.
\begin{itemlist}
\item If $\alpha\oplus \beta=2^n-1$, then equation $(\alpha\oplus x)=(\beta\oplus x) \dotplus \gamma$ has unique solution modulo $2^{n-1}$,
where $(a\dotplus b)=(a+b)\bmod 2^n$;

\item Equation $|(\alpha\oplus \beta)-(\beta\oplus \gamma)|=|(\alpha\oplus\bar{\beta})-(\beta\oplus\bar{\gamma})|$ always exists;

\item If $\alpha\oplus \beta=\gamma$, then $|\alpha-\beta|\leq \gamma$;

\item If $(((x\dotplus \alpha)\oplus \beta)\dotplus \gamma)\equiv x\oplus y$, then $y\equiv \beta\pmod{2^{n-1}}$.
\end{itemlist}

In 2000, Yen \textit{et al.} proposed a chaotic key-based algorithm (CKBA) by encrypting each pixel of a plain-image by four possible operations: XORing or XNORing
it with one of two predefined sub-keys. The exerted operation is determined by a pseudo-random number sequence (PRNS) generated by iterating the logistic map
\cite{Yen:CKBA:ISCAS2000}. In 2002, S. Li \textit{et al.} broke CKBA with only one known/chosen-image in \cite{Li-Zheng:CKBA:ISCAS2002}. In 2005, Socek \textit{et al.} proposed an enhanced
version of CKBA (ECKBA) employing the following four methods: 1) replacing the logistic map with a piecewise linear chaotic map (PWLCM); 2) increasing the bit length of secret key
to 128; 3) adding a modulo addition and an XOR operation; 4) running all the basic encryption functions multiple times. To achieve a much better balance between encryption load and
security of high level, in 2007 Rao \textit{et al.} proposed a modified version of CKBA (MCKBA) in \cite{Rao:ModifiedCKBA:ICDSP07} by employing a modular addition operation like \cite{SocekLi:SecureComm2005}. To further enhance the security of MCKBA against brute-force attack, in 2010 Gangadhar \textit{et al.} replaced
the logistic map with a simple hyperchaos generator proposed in \cite{Takahashi:HyperChaos:TCASII04} and
names the algorithm HCKBA (Hyper Chaotic-Key Based Algorithm) \cite{CH:HCKBA:IJBC10}. Since the two schemes MCKBA and HCKBA share the same
structure, \cite{LCQ:MCKBA:IJBC11} analyzed them together and reported the following points:
\begin{itemlist}
\item Equivalent secret key of MCKBA/HCKBA can be obtained from four pairs of chosen-plaintexts;

\item Encryption result of MCKBA/HCKBA is not sensitive to changes of plain-image;

\item Encryption result of MCKBA is not sensitive to changes of two sub-keys.

\item The lower bound on the number of queries $(\alpha, \beta)$ to solve unknown variable $x$ in equation
\begin{equation}
y=(\alpha\dotplus x)\oplus (\beta \dotplus x)
\label{eq:essentialfunction}
\end{equation}
in terms of modulo $2^{n-1}$ is 3 if $n\ge 4$.
\end{itemlist}

This paper re-evaluates the security of MCKBA/HCKBA and reports the following points: 1) some properties of Eq.~(\ref{eq:essentialfunction})
are provided to support practical approaches to solving Eq.~(\ref{eq:essentialfunction}); 2) MCKBA/HCKBA can be efficiently broken with two
known-plaintexts; 3) the chosen-plaintext attack proposed in \cite{LCQ:MCKBA:IJBC11} can be further improved and the number of required
chosen-plaintexts is only two.

The rest of this paper is organized as follows. The image encryption algorithm under study is briefly
introduced in Sec.~\ref{sec:scheme}. A known-plaintext attack and an improved chosen-plaintext attack on the algorithm is presented in
Sec.~\ref{sec:cryptanalysis} with experimental results. The last section concludes this paper.

\section{The Chaotic Image Encryption Algorithm Under Study}
\label{sec:scheme}

The encryption object of MCKBA is a gray-scale image of size $M\times
N$ (width$\times$height), which is scanned in the raster order and represented as a one dimensional sequence
$\bm{I}=\{I(i)\}_{i=0}^{MN-1}$. Then, a binary sequence $\bm{I}_b=\{I_b(l)\}_{l=0}^{8MN-1}$ is constructed,
where $\sum_{j=0}^7I_b(8\cdot i+j)\cdot 2^j=I(i)$ for $i=0\sim MN-1$. With a pre-defined integer parameter $n$, an $n$-bit number sequence $\bm{J}=\{J(k)\}_{k=0}^{\lceil 8MN/n\rceil-1}$ is generated, where $J(k)=\sum_{j=0}^{n-1}I_b(n\cdot k+j)\cdot 2^j$. In case $(8MN)$ is not a multiple of $n$,
the sequence $\bm{I}_b$ is padded with some zero bits. Without loss of generality, it is assumed that $n$ can divide $(8MN)$ in this paper. MCKBA operates on the intermediate sequence $\bm{J}$ and obtains
$\bm{J}'=\{J'(k)\}_{k=0}^{8MN/n-1}$, where $J'(k)=\sum_{j=0}^{n-1}I_b'(n\cdot k+j)\cdot 2^j$. Finally, cipher-image
$\bm{I}'=\{I'(k)\}_{k=0}^{MN-1}$ is obtained via $I'(k)=\sum_{j=0}^7I_b'(8\cdot k+j)\cdot 2^j$. Based on the above preliminary introduction, MCKBA is described
with the following four parts\footnote{Since the sole difference between MCKBA and HCKBA is the generator of PRBS, only MCKBA is introduced here
with a concise and consistent form to illustrate the encryption procedure.}.
\begin{itemlist}
\item \textit{The secret key}: Two random numbers $key_1$, $key_2\in\{0, \cdots, 2^n-1\}$, and the initial condition $x(0)\in(0,1)$
of the logistic map
\begin{equation}
x(k+1)=3.9\cdot x(k)\cdot(1-x(k)),\label{equation:Logistic}
\end{equation}
where $\sum_{j=0}^{n-1}(key_{1,j}\oplus key_{2,j})=\lceil n/2\rceil$, $key_1=\sum_{j=0}^{n-1}key_{1,j}\cdot 2^j$, $key_2=\sum_{j=0}^{n-1}key_{2,j}\cdot 2^j$, and
$\oplus$ denotes the eXclusive OR (XOR) operation.

\item \textit{Initialization}: Run Eq.~(\ref{equation:Logistic}) iteratively to generate a sequence $\{x(k)\}_{k=0}^{MN/(2n)-1}$ and
derive a pseudo-random binary sequence (PRBS), $\{b(l)\}_{l=0}^{16MN/n-1}$, from the 32-bit binary representation of elements of
the sequence, namely $x(k)=\sum_{j=1}^{32}b(32\cdot k+j-1)\cdot 2^{-j}$.

\item \textit{Encryption}: For $k=0\sim 8MN/n-1$, encrypt the $k$-th plain-element of $\bm{J}$ via
\begin{equation}
J'(k)=
\begin{cases}
(J(k)\dotplus key_1)\oplus key_1 & \mbox{if }B(k)=3;\\
(J(k)\dotplus key_1)\odot  key_1 & \mbox{if }B(k)=2;\\
(J(k)\dotplus key_2)\oplus key_2 & \mbox{if }B(k)=1;\\
(J(k)\dotplus key_2)\odot  key_2 & \mbox{if }B(k)=0,
\end{cases}
\label{eq:Encrypt}
\end{equation}
where $B(k)=2\cdot b(2k)+b(2k+1)$, and
$a\odot b=\overline{a\oplus b}=a\oplus\bar{b}$.

\item \textit{Decryption}: The decryption procedure is similar to that of the encryption except that Eq.~(\ref{eq:Encrypt}) is replaced by
\begin{equation}
J(k)=
\begin{cases}
(J'(k)\oplus key_1)\dot{-}key_1            & \mbox{if }B(k)=3;\\
(J'(k)\oplus \overline{key_1})\dot{-}key_1 & \mbox{if }B(k)=2;\\
(J'(k)\oplus key_2)\dot{-}key_2            & \mbox{if }B(k)=1;\\
(J'(k)\oplus \overline{key_2})\dot{-}key_2 & \mbox{if }B(k)=0,
\end{cases}
\label{eq:decrypt}
\end{equation}
where $a\dot{-} b=(a-b+2^n)\bmod 2^n$.
\end{itemlist}

\section{Cryptanalysis}
\label{sec:cryptanalysis}

Assume that two plain-images and the corresponding cipher-images encrypted with the same secret key are available, and let  $\bm{J}_1=\{J_1(k)\}_{k=0}^{8MN/n-1}$ and $\bm{J}_2=\{J_2(k)\}_{k=0}^{8MN/n-1}$ denote the corresponding intermediate sequences, respectively. Then, one can assure that the two sequences and the corresponding encrypted results $\bm{J}'_1=\{J_1'(k)\}_{k=0}^{8MN/n-1}$ and $\bm{J}'_2=\{J_2'(k)\}_{k=0}^{8MN/n-1}$
satisfy
\begin{equation}
J_1'(k)\oplus J_2'(k)=
\begin{cases}
(J_1(k)\dotplus key_1)\oplus (J_2(k)\dotplus key_1) & \mbox{if }B(k)\in\{2, 3\};\\
(J_1(k)\dotplus key_2)\oplus (J_2(k)\dotplus key_2) & \mbox{if }B(k)\in\{0, 1\}.
\end{cases}
\label{eq:DifferentialEquation}
\end{equation}
No matter what the value of $B(k)$ is, the above equation can be represented in the form of Eq.~(\ref{eq:essentialfunction}). In this section, we first present some properties of the kernel function (\ref{eq:essentialfunction}) on obtaining its solution and then illustrate how to obtain an equivalent secret key of MCKBA/HCKBA with two known plain-images and two chosen plain-images, respectively.

\subsection{Some properties of the kernel function}

\begin{property}
Equivalent form of Eq.~(\ref{eq:essentialfunction})
\begin{equation}
\tilde{y}=y\oplus \alpha\oplus \beta=(\alpha\dotplus x)\oplus(\beta\dotplus x)\oplus \alpha\oplus \beta
\label{eq:essentialfunctionform}
\end{equation}
can be represented as an iteration form
\begin{equation}
\left\{
\begin{IEEEeqnarraybox}[][c]{rCl}
\tilde{y}_{i+1} & = & c_{i+1}\oplus \tilde{c}_{i+1},\\
c_{i+1}         & = & (x_i\cdot \alpha_i) \oplus (x_i\cdot c_i) \oplus (\alpha_i\cdot c_i),\\
\tilde{c}_{i+1} & = & (x_i\cdot \beta_i) \oplus (x_i\cdot \tilde{c}_i) \oplus (\beta_i\cdot \tilde{c}_i),
\end{IEEEeqnarraybox}\right.
\label{eq:bitdecomposition}
\end{equation}
where $c_0\equiv0$, $\tilde{c}_{0}\equiv0$, $x=\sum_{i=0}^{n-1}x_i\cdot 2^i$, $\alpha=\sum_{i=0}^{n-1}\alpha_i\cdot 2^i$, $\beta=\sum_{i=0}^{n-1}\beta_i\cdot 2^i$, $\tilde{y}=\sum_{i=0}^{n-1}\tilde{y}_i\cdot 2^i$ (These notations are the same hereinafter.).
\end{property}
\begin{proof}
Set $c_0=0$, one can calculate
$c_{i+1}$ from $c_i$ and $\alpha_i$ via
\begin{equation}
c_{i+1}=(x_i\cdot \alpha_i) \oplus (x_i\cdot c_i) \oplus (\alpha_i\cdot c_i)
\label{eq:getcarrybit}
\end{equation}
for $i=0\sim n-2$. So, $c_{i+1}$ denote the carry bit generated by $x$ and $\alpha$ in the $i$-th bit plane. Set $\tilde{c}_0=0$, one can then obtain
\begin{equation*}
\tilde{c}_{i+1}=(x_i\cdot \beta_i) \oplus (x_i\cdot \tilde{c}_i) \oplus (\beta_i\cdot \tilde{c}_i)
\end{equation*}
for $i=0\sim n-2$. Similarly, $\tilde{c}_{i+1}$ denote the carry bit generated by $x$ and $\beta$ in the $i$-th bit plane. Obviously, $\tilde{y}_0=(\alpha_0\oplus x_0)\oplus(\beta_0\oplus x_0)\oplus \alpha_0\oplus \beta_0\equiv 0$.
Then, the $(i+1)$-th bit plane of Eq.~(\ref{eq:essentialfunctionform}) can be represented as
\begin{eqnarray*}
\tilde{y}_{i+1} & = & (\alpha_{i+1}\oplus c_{i+1}\oplus x_{i+1})\oplus (\beta_{i+1}\oplus\tilde{c}_{i+1}\oplus x_{i+1})\oplus \alpha_{i+1}\oplus \beta_{i+1}\\
                & = & c_{i+1}\oplus \tilde{c}_{i+1},
\end{eqnarray*}
where $i=0\sim n-2$. So, $\tilde{y}_i$ can be easily calculated iteratively according to Eq.~(\ref{eq:bitdecomposition}) for $i=0\sim n-2$, which
can also be done via checking Table~1 listing the values of $\tilde{y}_{i+1}$ under all possible different values of $\alpha_i, \beta_i,\tilde{y}_{i}, x_i$, and $c_i$.
\begin{table}[htbp]
\tbl{The values of $\tilde{y}_{i+1}$ corresponding to the values of $\alpha_i, \beta_i,\tilde{y}_{i}, x_i$, and $c_i$.}
{\begin{tabular}{*{8}{c}c}
\toprule
\multirow{2}{0.3in}{$(x_i, c_i)$}     & \multicolumn{8}{c}{$(\alpha_i, \beta_i,\tilde{y}_{i} )$} \\
\cline{2-9} & $(0,0,0)$ & $(0,0,1)$ & $(0,1,0)$ & $(0,1,1)$ & $(1,0,0)$ & $(1,0,1)$ & $(1,1,0)$ & $(1,1,1)$\\\hline
(0, 0)      &     0     &   0       &   0       &   1       &   0       &   0       &       0   &   1    \\
(0, 1)      &     0     &   0       &   1       &   0       &   1       &   1       &       0   &   1    \\  \hline
(1, 0)      &     0     &   1       &   1       &   1       &   1       &   0       &       0   &   0    \\
(1, 1)      &     0     &   1       &   0       &   0       &   0       &   1       &       0   &   0    \\ [-4pt]
\botrule
\end{tabular}}
\label{table:carrybits}
\end{table}
\end{proof}

\begin{property}
Given $(\alpha_i, \beta_i, \tilde{y}_{i}, \tilde{y}_{i+1})$, no information about $x_i$, $c_i$ and $\tilde{c}_i$ can be obtained (Note that $c_0$ and $\tilde{c}_0$ are excluded since they are pre-defined constants.) if and only if $(4\alpha_i+2\beta_i+\tilde{y}_i)\in \{0, 6\}$.
\end{property}
\begin{proof}
Since only the data in the 0, 6-th column (zero-based) of Table~1 are identical, it is impossible to obtain any information about $x_i$, $c_i$ and $\tilde{c}_i$ from $(\alpha_i, \beta_i, \tilde{y}_{i}, \tilde{y}_{i+1})$ if and only if $(4\alpha_i+2\beta_i+\tilde{y}_i)\in \{0, 6\}$.
\end{proof}

\begin{property}
Given $(\alpha_i, \beta_i, \tilde{y}_{i}, \tilde{y}_{i+1})$, the unknown bit $x_i$ can be determined via $x_i=\alpha_i\oplus\tilde{y}_{i+1}$, if and only if
$(4\alpha_i+2\beta_i+\tilde{y}_i)\in \{1, 7\}$.
\end{property}
\begin{proof}
Only under the cases shown in the 1, 7-th column of Table~1, $x_i$ can be determined by $(\alpha_i, \beta_i, \tilde{y}_{i}, \tilde{y}_{i+1})$ without knowledge of $c_i$. It is easy to verify that $x_i=\alpha_i\oplus\tilde{y}_{i+1}$ in terms of value.
\end{proof}

\begin{property}
Given $(\alpha_i, \beta_i, \tilde{y}_i, \tilde{y}_{i+1})$, carry bits $c_i$ and $\tilde{c}_i$ can be determined via $c_i=\beta_i\oplus \tilde{y}_{i+1}$ and
$\tilde{c}_i=\beta_i\oplus \tilde{y}_{i+1}\oplus \tilde{y}_{i}$ if and only if
$(4\alpha_i+2\beta_i+\tilde{y}_i)\in \{3, 5\}$.
\end{property}
\begin{proof}
Only under the cases shown in the 3, 5-th column of Table~1, $c_i$ can be determined by $(\alpha_i, \beta_i, \tilde{y}_{i}, \tilde{y}_{i+1})$ without knowledge of $x_i$. It is easy to verify that $c_i=\beta_i\oplus \tilde{y}_{i+1}$ in terms of value. Then, one can obtain $\tilde{c}_i=\beta_i\oplus \tilde{y}_{i+1}\oplus \tilde{y}_{i}$ since $\tilde{c}_i=c_i\oplus \tilde{y}_{i}$.
\end{proof}

\begin{property}
Given $(\alpha_i, \beta_i, \tilde{y}_{i}, \tilde{y}_{i+1})$, the scope of the unknown bits $x_i, c_i$ can be narrowed via
\begin{equation}
(x_i, c_i)\in
\begin{cases}
\{(0, 0), (1, 1)\} & \mbox{if } \tilde{y}_{i+1}=0;\\
\{(0, 1), (1, 0)\} & \mbox{if } \tilde{y}_{i+1}=1,
\end{cases}
\label{eq:cases2_4}
\end{equation}
if and only if $(4\alpha_i+2\beta_i+\tilde{y}_i)\in \{2, 4\}$.
\end{property}
\begin{proof}
Referring to the cases shown in the 2, 4-th column of Table~1, the scope of $(x_i, c_i)$ can be narrowed according to value of $\tilde{y}_{i+1}$.
It is easy to obtain Eq.~(\ref{eq:cases2_4}) from Table~1. Therefore, the ``if" part of the property is proven. Note that the number of possible values of $(4\alpha_i+2\beta_i+\tilde{y}_i)$ is only eight, and the sufficient and necessary conditions on obtaining different information on $x_i$ and $c_i$ under other six cases have been presented. Therefore, the ``only if" part of the property is also proven.
\end{proof}

\begin{property}
If $(\alpha_{i-1}\oplus \beta_{i-1})=1$ and $\tilde{y}_i=1$, one has
$c_i=\alpha_{i-1}$, where $i\in\{1, 2, \cdots,  n-2\}$.
\end{property}
\begin{proof}
When $(\alpha_{i-1}, \beta_{i-1})=(1, 0)$ and $\tilde{y}_i=1$, one can get $c_i=1$ from Eq.~(\ref{eq:bitdecomposition}) no matter the value
of $x_{i-1}$. Similarly, one has $c_i=0$ when $(\alpha_{i-1}, \beta_{i-1})=(0, 1)$ and $\tilde{y}_i=1$. So, the property is proven.
\end{proof}

\begin{property}
Given $\alpha, \beta, \tilde{y}$, some bits among the $(n-1)$ least significant bits of $x$ in Eq.~(\ref{eq:essentialfunctionform}) can be determined from the least significant bit to the most significant one.
\label{propsition:solution}
\end{property}
\begin{proof}
The concrete approaches to solving Eq.~(\ref{eq:essentialfunction}) and determining the carry bits can be divided into the following two classes of operations.
\begin{itemlist}
\item \textit{Obtaining information on $x_0$ and $c_1$}:
According to how much information on $x_0$ and $c_1$ can be obtained, $(\alpha_0, \beta_0, \tilde{y}_0)$ is further classified as the following two cases.
\begin{itemlist}
\item $(4\alpha_0+2\beta_0+\tilde{y}_0)\in\{0, 6\}$:
Referring to Property~2, $x_0$ can not be determined in this case, but one can obtain $c_1=0$ if $\alpha_0=0$.

\item $(4\alpha_0+2\beta_0+\tilde{y}_0)\in\{2, 4\}$: As $c_0=0$, one can obtain
\begin{equation}
x_0=
\begin{cases}
0 & \mbox{if } \tilde{y}_{1}=0;\\
1 & \mbox{if } \tilde{y}_{1}=1,
\end{cases}
\label{eq:x0}
\end{equation}
from Eq.~(\ref{eq:cases2_4}). Then, one can further obtain
\begin{equation*}
c_1=
\begin{cases}
0 & \mbox{if } \tilde{y}_{1}=0;\\
1 & \mbox{if } \alpha_0=1 \mbox{ and }\tilde{y}_{1}=1;\\
0 & \mbox{if } \alpha_0=0 \mbox{ and }\tilde{y}_{1}=1.
\end{cases}
\end{equation*}
\end{itemlist}

\item \textit{Obtaining information on $x_i$, $c_i$ and $c_{i+1}$ for $i=1\sim n-2$}:
According to how much information on $x_i$, $c_i$ and $c_{i+1}$ can be obtained by checking $(\alpha_i, \beta_i, \tilde{y}_i)$ and the obtained information on $c_i$ for $i=1\sim n-2$ in order, $(\alpha_i, \beta_i, \tilde{y}_i)$ is categorized as the following four cases\footnote{As confirmation of $c_i$ is equivalent to that of $\tilde{c}_i$, the latter is not mentioned.}.
\begin{itemlist}
\item $(4\alpha_i+2\beta_i+\tilde{y}_i)\in\{0, 6\}$:
Referring to Property~2, no information on $x$ can be determined in this case.
The value of $c_{i+1}$ can be determined by Eq.~(\ref{eq:getcarrybit}) if
\begin{equation}
\{c_i+\alpha_i\}\in \{0, 2\}\mbox{ is known.}
\label{eq:condition}
\end{equation}

\item $(4\alpha_i+2\beta_i+\tilde{y}_i)\in\{1, 7\}$: One has
\begin{equation}
x_i=\alpha_i\oplus \tilde{y}_{i+1}
\label{eq:xi17}
\end{equation}
from Property~3. If $c_i$ has been determined, one can obtain
$c_{i+1}$. Even $c_i$ is still unknown, one can confirm $c_{i+1}$ by Eq.~(\ref{eq:getcarrybit}) if $(\alpha_i+x_i)=0$ or $(\alpha_i+x_i)=2$ is known.

\item $(4\alpha_i+2\beta_i+\tilde{y}_i)\in\{2, 4\}$: If $c_i$ has been determined, based on Property~5 one can obtain
\begin{equation}
x_i=
\begin{cases}
1-\tilde{y}_{i+1} & \mbox{if }  c_i=1;\\
\tilde{y}_{i+1}  & \mbox{if }  c_i=0,
\end{cases}
\label{eq:xi24}
\end{equation}
and further confirm the value of $c_{i+1}$.

\item $(4\alpha_i+2\beta_i+\tilde{y}_i)\in\{3, 5\}$: Referring to Property~4, one can obtain $c_i=\beta_i\oplus \tilde{y}_{i+1}$.
\end{itemlist}
\end{itemlist}
\end{proof}

\subsection{Known-plaintext attack}

Known-plaintext attack is one of the classic attack models where the attacker (or cryptanalyst) can
access both some plaintexts and the corresponding encryption results encrypted with the same secret key.
In \cite[Sec.~3.2]{CH:HCKBA:IJBC10}, the original authors claimed that HCKBA has strong vulnerability against known-plaintext attack.
However, we found MCKBA/HCKBA is very weak against the attack, which is supported by the properties of Eq.~(\ref{eq:essentialfunction}) shown
in the previous subsection.

Under the scenario of known-plaintext attack, breaking MCKBA/HCKBA is to determine some information of its equivalent secret key, $key1$, $key2$ and $\{B(k)\}_{k=0}^{8MN/n-1}$,
by solving Eq.~(\ref{eq:DifferentialEquation}) and utilizing some properties of MCKBA/HCKBA. From Property~\ref{propsition:solution}, one can see that some bits of $key1$ and $key2$ can be obtained from Eq.~(\ref{eq:DifferentialEquation}) for any $k\in\{0, \cdots, 8MN/n-1\}$, where the other unknown bits are just set as zero. Let $key(k)$ denote the obtained solution of Eq.~(\ref{eq:DifferentialEquation}) and $s(k,i)$ represent $key(k)_i$ is confirmed definitely or not, i.e. set $s(k,i)=1$ if $key(k)_i$ is confirmed by Eq.~(\ref{eq:x0}), Eq.~(\ref{eq:xi17}), or Eq.~(\ref{eq:xi24}); otherwise set $s(k,i)=0$, where
$key(k)=\sum_{i=0}^{n-1}key(k)_i\cdot 2^i$, and $k=0\sim 8MN/n-1$. Then, one may reconstruct set $\{key1, key2\}$ from $\{key(k)\}_{k=0}^{8MN/n-1}$ and
$\{s(k,i)\}_{k=0, i=0}^{8MN/n-1, n-2}$ by identifying and combining the known bits belonging to the same number, which is described by the following steps.
\begin{itemlist}
\item \textit{Step 1)}: Set $\mathbb{K}=\{key(0), key(1), \cdots, key(8MN/n-1)\}$.
Delete some elements of $\mathbb{K}$ to assure that every pair of elements of $\mathbb{K}$ has at least one different confirmed bit.

\item \textit{Step 2)}: Search for the first two elements in $\mathbb{K}$ whose
number of confirmed bits are most but the confirmed bits of the two elements are not all the same. Let $Seed(0)$ and $Seed(1)$
denote the two seed elements and delete them from $\mathbb{K}$.

\item \textit{Step 3)}: Check each element of $\mathbb{K}$ in turn and do the following two operations if it has one confirmed bit which is different from that of $Seed(i)$:
1) update $Seed(1-i)$ by combining all the confirmed bits of the element into that of $Seed(1-i)$; 2) delete the element from $\mathbb{K}$, where
$i\in\{0, 1\}$.

\item \textit{Step 4)}: Repeat \textit{Step 3)} iteratively till the numbers of confirmed bits of $Seed(0)$ and $Seed(1)$ are not increased in the whole step.

\item \textit{Step 5)}: Terminate the whole search operation when all bits of $Seed(0)$ and $Seed(1)$ are confirmed bits; otherwise
repeat \textit{Step 2)} through \textit{Step 4)} till the cardinality of $\mathbb{K}$ is less than 2.
\end{itemlist}

Let us study the probability on obtaining $x_i$ and $c_i$ with one pair of $\alpha$, $\beta$ and $\tilde{y}$ under assumption that $\alpha$, $\beta$ and $x$ distributes over $\{0, \cdots, 2^n-1\}$ uniformly. First, one has $Prob(c_0=1)=0$ and
\begin{equation*}
Prob(c_i=1)=\frac{3}{4}Prob(c_{i-1}=1)+\frac{1}{4}Prob(c_{i-1}=0)
\end{equation*}
for $i=1\sim n-1$. Solve the above iteration function, one can obtain $Prob(c_i=1)=\frac{2^i-1}{2^{i+1}}$. Obviously, one has $Prob(\tilde{y}_0=0)=1$.
Observe Table~1, one can see that the value of $\tilde{y}_i$ is determined by the values of $\tilde{y}_{i-1}$, $c_{i-1}$, $x_{i-1}$, $\alpha_{i-1}$ and $\beta_{i-1}$ for $i=1\sim n-1$. Considering all possible cases, one has
\begin{IEEEeqnarray*}{rcl}
Prob(\tilde{y}_i=0)
& = & Prob(\tilde{y}_{i-1}=0) \left(Prob(c_{i-1}=0)\left(\frac{1}{2}\cdot 1+\frac{1}{2}\cdot \frac{1}{2}\right)+Prob(c_{i-1}=1)\left(\frac{1}{2}\cdot \frac{1}{2}+\frac{1}{2}\cdot 1\right)\right)\\
&  &+Prob(\tilde{y}_{i-1}=1)\left(Prob(c_{i-1}=0)\left(\frac{1}{2}\cdot \frac{1}{2}+\frac{1}{2}\cdot \frac{1}{2}\right)+Prob(c_{i-1}=1)\left(\frac{1}{2}\cdot \frac{1}{2}+\frac{1}{2}\cdot \frac{1}{2}\right)\right)\\
&  &=\frac{3}{4}Prob(\tilde{y}_{i-1}=0)+\frac{1}{2}Prob(\tilde{y}_{i-1}=1)
\end{IEEEeqnarray*}
for $i=1\sim n-1$. Solve the iteration function, one can obtain
\begin{equation}
Prob(\tilde{y}_i=0)=\frac{2}{3}+\frac{1}{3\cdot 4^i}.
\label{eq:probyi}
\end{equation}
From the proof of Property~7, one can first calculate $Prob[c_0]=1$, $Prob[c_1]=Prob((4\alpha_0+2\beta_0+\tilde{y}_0)\in\{0, 6\})\cdot\frac{1}{2}+Prob((4\alpha_0+2\beta_0+\tilde{y}_0)\in\{2, 4\})\cdot 1=\frac{1}{4}+\frac{1}{2}=\frac{3}{4}$
and
\begin{IEEEeqnarray}{rcl}
Prob[c_i]  & = & Prob((4\alpha_i+2\beta_i)\in\{0, 6\})Prob(\tilde{y}_{i-1}=0) Prob[c_{i-1}]Prob(\mbox{Condition}~(\ref{eq:condition}) \mbox{ holds})\nonumber\\
&   & +Prob((4\alpha_i+2\beta_i)\in\{0, 6\})Prob(\tilde{y}_{i-1}=1)\left(Prob[c_{i-1}]+(1-Prob[c_{i-1}])Prob((\alpha_i+x_i)\in\{0, 2\})\right)\nonumber\\
          &   &+Prob((4\alpha_i+2\beta_i)\in\{2, 4\})Prob(\tilde{y}_{i-1}=0) Prob[c_{i-1}]+Prob((4\alpha_i+2\beta_i)\in\{2, 4\})Prob(\tilde{y}_{i-1}=1)\nonumber\\
& = &\frac{1}{2} Prob(\tilde{y}_{i-1}=0) Prob[c_{i-1}]\frac{1}{2}+\frac{1}{2}Prob(\tilde{y}_{i-1}=1)\cdot\left(Prob[c_{i-1}]+(1-Prob[c_{i-1}])\cdot\frac{1}{2}\right)\nonumber\\
          &   &+\frac{1}{2}Prob(\tilde{y}_{i-1}=0) Prob[c_{i-1}]+\frac{1}{2}Prob(\tilde{y}_{i-1}=1)\nonumber\\
          & = & Prob[c_{i-1}]\left(\frac{1}{2}Prob(\tilde{y}_{i-1}=0)+\frac{1}{4}\right)+\frac{3}{4}Prob(\tilde{y}_{i-1}=1)\nonumber\\
          & = & Prob[c_{i-1}]\left(\frac{7}{12}+\frac{1}{6\cdot 4^{i-1}}\right)+\frac{1}{4}-\frac{1}{4^i}  \label{eq:probci}
\end{IEEEeqnarray}
for $i=2\sim n-1$, where $Prob[a]$ denotes the probability that the bit $a$ can be confirmed.
Finally, one has $Prob[x_0]=\frac{1}{2}$ and
\begin{IEEEeqnarray}{rcl}
Prob[x_i] &=& \frac{1}{2}Prob(\tilde{y}_{i}=1)+\frac{1}{2}Prob(\tilde{y}_{i}=0) Prob[c_i]
     \label{eq:probxi}
\end{IEEEeqnarray}
for $i=1\sim n-2$. Incorporate Eq.~(\ref{eq:probyi}) and Eq.~(\ref{eq:probci}) into Eq.~(\ref{eq:probxi}), one can obtain that
$Prob[x_0]=\frac{1}{2}$, $Prob[x_1]=0.4062$, $Prob[x_2]= 0.3818$, and $Prob[x_i]\approx 0.37$ for $i\ge 3$. Now, one can assure that $key1_i$ and $key2_i$ can not be confirmed definitely with a probability smaller than or equal to $(1-0.37)^{n_0}=0.63^{n_0}$ and $0.63^{(8MN/n-n_0)}$ respectively, where $n_0$ is cardinality of the set $\{k | B(k)\in\{2, 3\}, k=0\sim 8MN/n-1\}$. Note that
confirmation of the bits of $x$ in  Eq.~(\ref{eq:essentialfunction}) is determined by $\alpha$, $\beta$, and the used unknown variable $x$, and no any bit can be confirmed under some cases (See the examples shown in Table~2). In addition, as for plaintext chose from natural images, $\alpha$ follows Gaussian distribution. But, we can still believe that
the set $\{key1, key2\}$ can be reconstructed in a high probability since the values of $n_0$ and $8MN/n-n_0$ are both very large in general.
\begin{table}[htbp]
\tbl{The number of confirmed bits of $x$ in Eq.~(\ref{eq:essentialfunction}) under some sets of $(\alpha, \beta, \tilde{y})$ when $n=8$, where
$\tilde{y}$ is determined by $(\alpha, \beta, x)$ via Eq.~(\ref{eq:essentialfunction}).}
{\begin{tabular}{c|*{7}{c}c}
\toprule
\multirow{2}{0.1in}{$x$} & \multicolumn{8}{c}{$(\alpha, \beta)$} \\
\cline{2-9} & $(0, 1)$ & $(17, 54)$ & $(31, 102)$ & $(44,93)$ & $(51, 95)$ & $(73, 79)$ & $(87, 122)$ & $(125, 126)$\\\hline
    7       &  4       &   1        &   3         & 4      &    1          &   0    &       6           &  1   \\
    8       &  1       &   3        &   1         &  1    &     1         &    0   &        3          &   2  \\
    9       &  2       &   1        &   3         &  2    &    1          &    1   &        3          &   1  \\
    28      &  1       &   7        &   1         &   1   &    2          &    1   &        2          &   2  \\
    59      &  3       &   4        &   6         &  7    &    3          &    1   &        3          &   1  \\
    71      &  8       &   1        &   3         & 4     &    1          &    0   &        6          &   1  \\
    76      &  1       &   6        &   1         &  1    &    1          &    1   &        2          &   2  \\
    95      &  6       &   2        &   5         & 5     &    0          &   0    &        6          &   1  \\
    99      &  3       &   3        &   3         & 4     &    1          &   2    &        4          &   1  \\
    111     &  5       &   1        &   4         & 1     &    0          &   0    &        5          &   1  \\
    125     &  2       &   4        &   7         & 2     &    0          &   1    &        2          &   1  \\
    127     &  7       &   1        &   6         & 6     &    0          &   0    &        5          &   1  \\[-4pt]
    \botrule
\end{tabular}}
\label{table:numbersolution}
\end{table}

Referring to Proposition~\ref{propsition:lsb} and Eq.~(\ref{eq:Encrypt}), one can obtain the scope of $B(k)$,
\begin{equation}
\mathbb{B}(k)=
\begin{cases}
\{1, 3\}  & \mbox{if } (J_1'(k)\oplus J_1(k))\bmod 2=0;\\
\{0, 2\}  & \mbox{otherwise},
\end{cases}
\label{eq:determinebits2}
\end{equation}
for $k=0\sim 8MN/n-1$.

\begin{proposition}
Assume that $a$ and $x$ are both $n$-bit integers and $n\in \mathbb{Z}^+$,
$((a\dotplus x)\oplus x)$ has the same parity as $a$ and $((a\dotplus x)\odot x)$ has opposite parity as $a$.
\label{propsition:lsb}
\end{proposition}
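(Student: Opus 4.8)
The plan is to reduce both claims to a single bit, since the parity of any integer equals its $0$-th (least significant) bit. First I would record the one fact that does all the work: the modulo-$2^n$ addition produces no carry into bit $0$, so the least significant bit of $(a\dotplus x)$ is simply $a_0\oplus x_0$, where $a_0$ and $x_0$ denote the $0$-th bits of $a$ and $x$. This is the crux of the argument; once it is in hand, each of the two assertions collapses to a one-line XOR computation on single bits.

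For the first claim I would compute the $0$-th bit of $((a\dotplus x)\oplus x)$ as $(a_0\oplus x_0)\oplus x_0=a_0$, so the expression and $a$ carry the same least significant bit and hence the same parity. For the second claim I would invoke the definition $u\odot v=u\oplus\bar v$ and compute the $0$-th bit of $((a\dotplus x)\odot x)$ as $(a_0\oplus x_0)\oplus\overline{x_0}$; since $\overline{x_0}=1\oplus x_0$ for a one-bit quantity, this reduces to $a_0\oplus 1=\overline{a_0}$, which is the parity opposite to that of $a$. Thus both statements follow directly from the value of bit $0$.

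The only step that deserves any care is justifying that bit $0$ of $(a\dotplus x)$ is exactly $a_0\oplus x_0$, unaffected by either the carry chain or the reduction modulo $2^n$. This is immediate: the reduction modulo $2^n$ only discards bits of weight at least $n$ and therefore leaves bit $0$ untouched, while position $0$ has no incoming carry because the carry recurrence of Eq.~(\ref{eq:getcarrybit}) is initialized with $c_0\equiv 0$. With this observation secured, the remainder is routine single-bit XOR algebra, so I expect no genuine obstacle in the proof.
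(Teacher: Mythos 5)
Your proof is correct and follows essentially the same route as the paper: both reduce the parity claim to the least significant bit, use the fact that no carry enters position~$0$, and then verify the single-bit identity (the paper by enumerating the four cases of $a_0$ and the operation, you by a short XOR computation). No issues.
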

\begin{proof}
Existence of four equations
\begin{eqnarray*}
((1+x_0)\bmod 2)\oplus x_0 & \equiv & 1,\\
((0+x_0)\bmod 2)\oplus x_0 & \equiv & 0,\\
((1+x_0)\bmod 2)\odot x_0 &  \equiv & 0,\\
((0+x_0)\bmod 2)\odot x_0 &  \equiv & 1,
\end{eqnarray*}
is independent of $x_0$, so the proposition is proved.
\end{proof}

\begin{proposition}
Assume that $a$ and $x$ are both $n$-bit integers, $n\in \mathbb{Z}^+$, one has the following two equations
\begin{eqnarray*}
(a\oplus x)\dot{-}x            & = & (a\oplus x\oplus 2^{n-1})\dot{-}(x\oplus 2^{n-1}),           \label{eq:EquiKey1}\\
(a\oplus \overline{x})\dot{-}x & = & (a\oplus \overline{x\oplus 2^{n-1}})\dot{-}(x\oplus 2^{n-1}).\label{eq:EquiKey2}
\end{eqnarray*}
\label{propsition:msb}
\end{proposition}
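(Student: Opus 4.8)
The plan is to reduce both identities to a single elementary fact about how the exclusive-OR with $2^{n-1}$ interacts with arithmetic modulo $2^n$. First I would establish the key observation that flipping the most significant bit of any $n$-bit integer $m$ is equivalent to shifting it by $2^{n-1}$ in the modular ring, i.e.
\[
m\oplus 2^{n-1}\equiv m+2^{n-1}\pmod{2^n}.
\]
This holds because flipping the top bit adds $2^{n-1}$ when that bit is $0$ and subtracts $2^{n-1}$ when it is $1$, and $-2^{n-1}\equiv 2^{n-1}\pmod{2^n}$. This single congruence is the engine behind the whole statement.

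For the first identity, I would set $x'=x\oplus 2^{n-1}$ and observe that the left argument transforms as $a\oplus x'=(a\oplus x)\oplus 2^{n-1}$, so both arguments of the $\dot{-}$ operation pick up an additive shift of $2^{n-1}$ modulo $2^n$. Writing $y=a\oplus x$, the right-hand side becomes $(y\oplus 2^{n-1})\dot{-}(x\oplus 2^{n-1})$, which by the key observation equals $((y+2^{n-1})-(x+2^{n-1}))\bmod 2^n=(y-x)\bmod 2^n=y\dot{-}x$. The $2^{n-1}$ contributions cancel exactly, recovering the left-hand side.

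For the second identity, the extra ingredient is that complementation commutes with the top-bit flip: since $\overline{m}=m\oplus(2^n-1)$, one has $\overline{x\oplus 2^{n-1}}=\overline{x}\oplus 2^{n-1}$. Hence $a\oplus\overline{x\oplus 2^{n-1}}=(a\oplus\overline{x})\oplus 2^{n-1}$, and setting $z=a\oplus\overline{x}$ reduces the right-hand side to $(z\oplus 2^{n-1})\dot{-}(x\oplus 2^{n-1})$. Applying the key observation once more, the two shifts cancel and this equals $z\dot{-}x$, which is the left-hand side.

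The calculations here are routine modular arithmetic, so I do not anticipate a genuine obstacle; the only point demanding care is the bookkeeping of the complement in the second identity---one must verify that the all-ones mask $2^n-1$ and the single-bit mask $2^{n-1}$ commute under XOR, so that the MSB flip can be pulled outside the complement. Once that is checked, both identities collapse to the same cancellation of $2^{n-1}$ in the subtraction modulo $2^n$, which is exactly the assertion that the unknown most significant bit of the recovered keys cannot affect decryption.
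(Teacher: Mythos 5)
Your proof is correct. The key congruence $m\oplus 2^{n-1}\equiv m+2^{n-1}\pmod{2^n}$ is exactly the right engine: it holds because flipping the top bit adds or subtracts $2^{n-1}$ and these coincide modulo $2^n$, and once both arguments of $\dot{-}$ acquire the same additive shift the subtraction cancels it. Your handling of the complement in the second identity is also sound, since $\overline{m}=m\oplus(2^n-1)$ makes the MSB flip commute with complementation by associativity of XOR. Note that the paper itself does not prove this proposition at all --- it merely points to the proof of Proposition~1 in the earlier cryptanalysis paper \cite{LCQ:MCKBA:IJBC11} --- so your argument cannot be compared line by line with an in-text proof. What your write-up buys is a short, self-contained verification that makes the present paper independent of that reference; the only cosmetic improvement would be to state explicitly that you are applying the congruence inside the definition $a\dot{-}b=(a-b+2^n)\bmod 2^n$, so that the cancellation of the two $2^{n-1}$ terms is visibly a statement about residues rather than about integers.
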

\begin{proof}
See the proof of Proposition~1 in \cite{LCQ:MCKBA:IJBC11}.
\end{proof}

According to the pre-defined condition $key1\neq key2$, there are only two possible combinations of $key1$ and $key2$.  Let $(key1^*, key2^*)$ denote the searched version of $(key1, key2)$. Proposition~\ref{propsition:msb} illustrates that the unknown most significant bit of $key1^*=\sum_{j=0}^{n-1}key1^*_j\cdot 2^j$ and $key2^*=\sum_{j=0}^{n-1}key1_j^*\cdot 2^j$ has no influence on decryption of MCKBA/HCKBA. Then, one can further obtain the approximate value of $B(k)$, $B^*(k)$, and the $n-1$ least significant bits of
$key1$ and $key2$ by the following two different ways:
\begin{itemlist}
\item  \textit{W1)} For $k=0\sim 8MN/n-1$, one has
\begin{equation}
B^*(k)=
\begin{cases}
3  & \mbox{if } F(key_1^*, J_1(k), J'_1(k))=0, F(key_2^*, J_1(k), J'_1(k))\neq 0\\
   &  \mbox{or } F(key_1^*, J_2(k), J'_2(k))=0, F(key_2^*, J_2(k), J'_2(k))\neq 0;\\
2  & \mbox{if } G(key_1^*, J_1(k), J'_1(k))=0, G(key_2^*, J_1(k), J'_1(k))\neq0\\
   &  \mbox{or } G(key_1^*, J_2(k), J'_2(k))=0, G(key_2^*, J_2(k), J'_2(k))\neq0;\\
1  & \mbox{if } F(key_1^*, J_1(k), J'_1(k))\neq0, F(key_2^*, J_1(k), J'_1(k))=0\\
   &  \mbox{or } F(key_1^*, J_2(k), J'_2(k))\neq0, F(key_2^*, J_2(k), J'_2(k))= 0;\\
0  & \mbox{if } G(key_1^*, J_1(k), J'_1(k))\neq, G(key_2^*, J_1(k)^*, J'_1(k))=0\\
   &  \mbox{or } G(key_1^*, J_2(k), J'_2(k))\neq0, G(key_2^*, J_2(k), J'_2(k))=0,
\end{cases}
\label{eq:determineBk}
\end{equation}
where $F(x, \alpha, y)=y-((\alpha\dotplus x)\oplus x)$, $G(x, \alpha, y)=y-((\alpha\dotplus x)\odot x)$. Note that
Eq.~(\ref{eq:determinebits2}) makes only two conditions in Eq.~(\ref{eq:determineBk}) need being
verified. Obviously, one can assure that $\sum_{j=0}^{n-2}key^*_{1,j}\cdot 2^j=\sum_{j=0}^{n-2}key_{1,j}\cdot 2^j$, $\sum_{j=0}^{n-2}key^*_{2,j}\cdot 2^j=\sum_{j=0}^{n-2}key_{2,j}\cdot 2^j$.


\item \textit{W2)} When there exists $i\in\{0, \cdots, n-2\}$ satisfying that $s(k, i)=1$, one can obtain
\begin{equation}
\mathbb{B}^*(k)=
\begin{cases}
\{2,3\}  & \mbox{if } key(k)_i\neq key2^*_{i};\\
\{0,1\}  & \mbox{if } key(k)_i\neq key1^*_{i},
\end{cases}
\label{eq:determinebits1}
\end{equation}
for $k=0\sim 8MN/n-1$. Then, the value of $B(k)$ can be obtained by setting $B^*(k)=\mathbb{B}^*(k)\bigcap \mathbb{B}(k)$ for
$k=0\sim 8MN/n-1$. Finally, one can conclude that $(key1^*, key2^*)=(\sum_{i=0}^{n-2}key1^*_{i}\cdot 2^i$, $\sum_{i=0}^{n-2}key2^*_{i}\cdot 2^i)$, and
$\{B^*(k)\}_{k=0}^{8MN/n-1}$ can work together as equivalent secret key of MCKBA/HCKBA due to that $(key1, key2, B(k))=(a, b, c)$
and $(key1, key2, B(k))=(b, a, (c+2)\bmod 4)$ are equivalent for Eq.~(\ref{eq:decrypt}).
\end{itemlist}

Now, let's study the success probability of the above two methods. The success of the method \textit{W1) depends on existence of one of the eight
condition in Eq.~(\ref{eq:determineBk}). As $F(x, \alpha, y)=0$ if and only if $G(x, \alpha, y\oplus (2^n-1))=0$, only one of the two functions
need being studed. Obviously, }
\begin{equation}
\begin{cases}
F(x, 0, y)\equiv y, \\
F(x, 2^{n-1}, y)\equiv y-2^{n-1}.
\end{cases}
\label{eq:SpecialCondition}
\end{equation}
So, $B(k)$ can not be confirmed by  Eq.~(\ref{eq:determineBk}) when
 $J_1(k)\in\{0, 2^{n-1} \}$ and $J_2(k)\in\{0, 2^{n-1} \}$ exist at the same time. It is very hard to derive the probability for other cases theoretically.
Instead, we calculate the probability that $F(x_1, \alpha, y)=F(x_2, \alpha, y)=0$ via simulation, where
\begin{equation}
\sum_{j=0}^{n-1}(x_{1,j}\oplus x_{2,j})=\lceil n/2\rceil,
\label{eq:constraint}
\end{equation}
$x_1=\sum_{j=0}^{n-1}x_{1,j}\cdot 2^j$, $x_2=\sum_{j=0}^{n-1}x_{2,j}\cdot 2^j$. Assume $key1, key2$ distributes uniformly, the distribution of
probability $F(key1, \alpha, y)=F(key2, \alpha, y)=0$ under different values of $\alpha$ and some values of $n$ is shown in Fig.~\ref{fig:probability}.

\begin{figure}[!htb]
\centering
\begin{minipage}{2\figwidth}
\centering
\includegraphics[width=\textwidth]{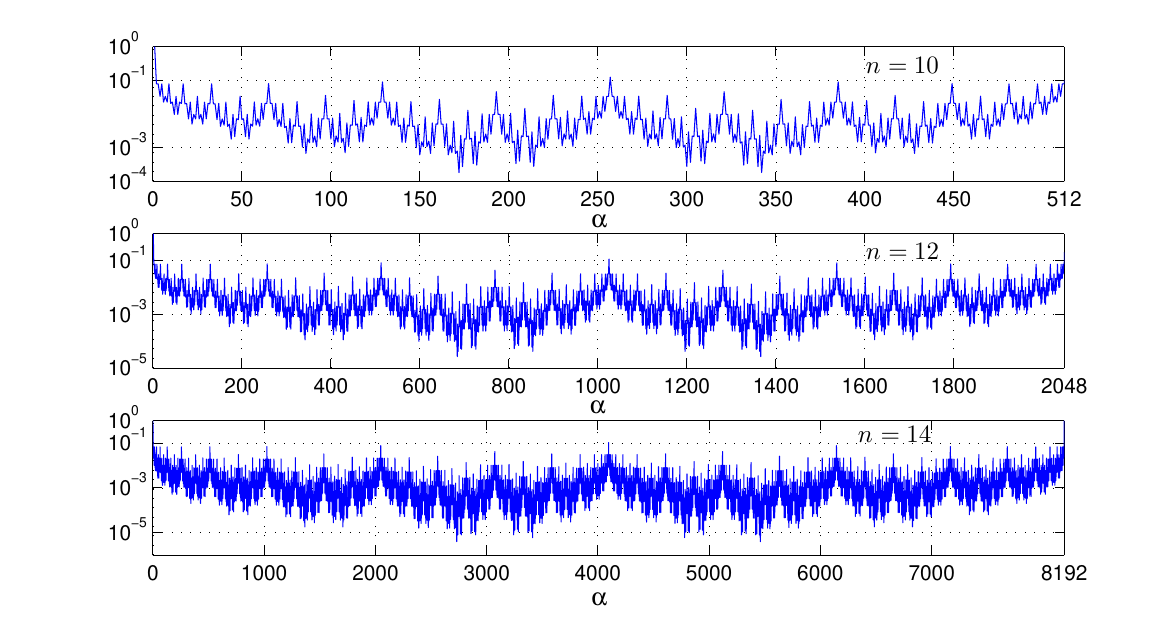}
\end{minipage}
\caption{The probability $F(x_1, \alpha, y)=F(x_2, \alpha, y)=0$ under different value of $\alpha$, where $x_1, x_2$ satisfy the constraint condition (\ref{eq:constraint}).}
\label{fig:probability}
\end{figure}

The success of the method \textit{W2) can be analyzed as follows. Assume $key1, key2$ distributes uniformly, one can get the probability that at least one bit (exclude the most significant bit) of $key(k)$
satisfy $s(k, i)=1$ and one condition of Eq.~(\ref{eq:determinebits1}) is}
\begin{IEEEeqnarray*}{rcl}
Prob[ \mbox{Eq.}~(\ref{eq:determinebits1}) \mbox{ holds} ] & =& 1-\sum_{i=0}^{n-1}\binom{n-1}{i}\left(Prob[x_i]\right)^i\left(1-Prob[x_i]\right)^{n-1-i}\left(\frac{1}{2}\right)^i\\
& \ge & 1-\sum_{i=0}^{n-1}\binom{n-1}{i}\left(\frac{1}{2}\right)^i\left(1-\frac{1}{2}\right)^{n-1-i}\left(\frac{1}{2}\right)^i\\
     & = & 1-\left(\frac{1}{2}\right)^{n-1}\sum_{i=0}^{n-1}\binom{n-1}{i}\left(\frac{1}{2}\right)^i\\
     & = & 1-\left(\frac{3}{4}\right)^{n-1}.
\end{IEEEeqnarray*}
From the above analysis, one can see that $B(k)$, $k=0\sim 8MN/n-1$, can be determined with a probability larger than $1-\left(\frac{3}{4}\right)^{n-1}$ when
$\alpha$, $\beta$ and $x$ in Eq.~(\ref{eq:essentialfunction}) distributes uniformly. Although pixels of natural images follow Gaussian distribution, and
$J_1(k)$ and $J_2(k)$ in Eq.~(\ref{eq:DifferentialEquation}) do not distribute uniformly, we still can believe that the success probability of this method
is very high since only one bit satisfying the conditions in Eq.~(\ref{eq:determinebits1}) is needed, especially when $n$ is relatively large.

To verify the real performance of the above analysis, a number of experiments are carried out on some plain-images of size $512\times 512$ with the method \textit{W1)} when $n=32$. When $x_0=319684607/2^{32}$, $key_1=3835288501$, and $key_2=1437224678$. Two known plain-images ``Peppers" and ``Baboon", and the corresponding cipher-images are adopted.
Equivalent key $key1^*$, $key2^*$ and $\{B^*(k)\}_{k=0}^{8MN/n-1}$ is used to decrypt another cipher-image
shown in Fig.~\ref{fig:DecryptedLenna}a) and the recovered result is shown in Fig.~\ref{fig:DecryptedLenna}b), which is identical with the original version. From the
experiment, we found that only a little pixels (no more than ten pixels for plain-image of size $512\times 512$) are not recovered correctly when $n\leq 28$. This agree with our expectation as the probability that none of condition of Eq.~(\ref{eq:determineBk}) and condition~(\ref{eq:SpecialCondition}) are satisfied become larger when $n$ is smaller.

\begin{figure}[!htb]
\centering
\begin{minipage}{\figwidth}
\centering
\includegraphics[width=\textwidth]{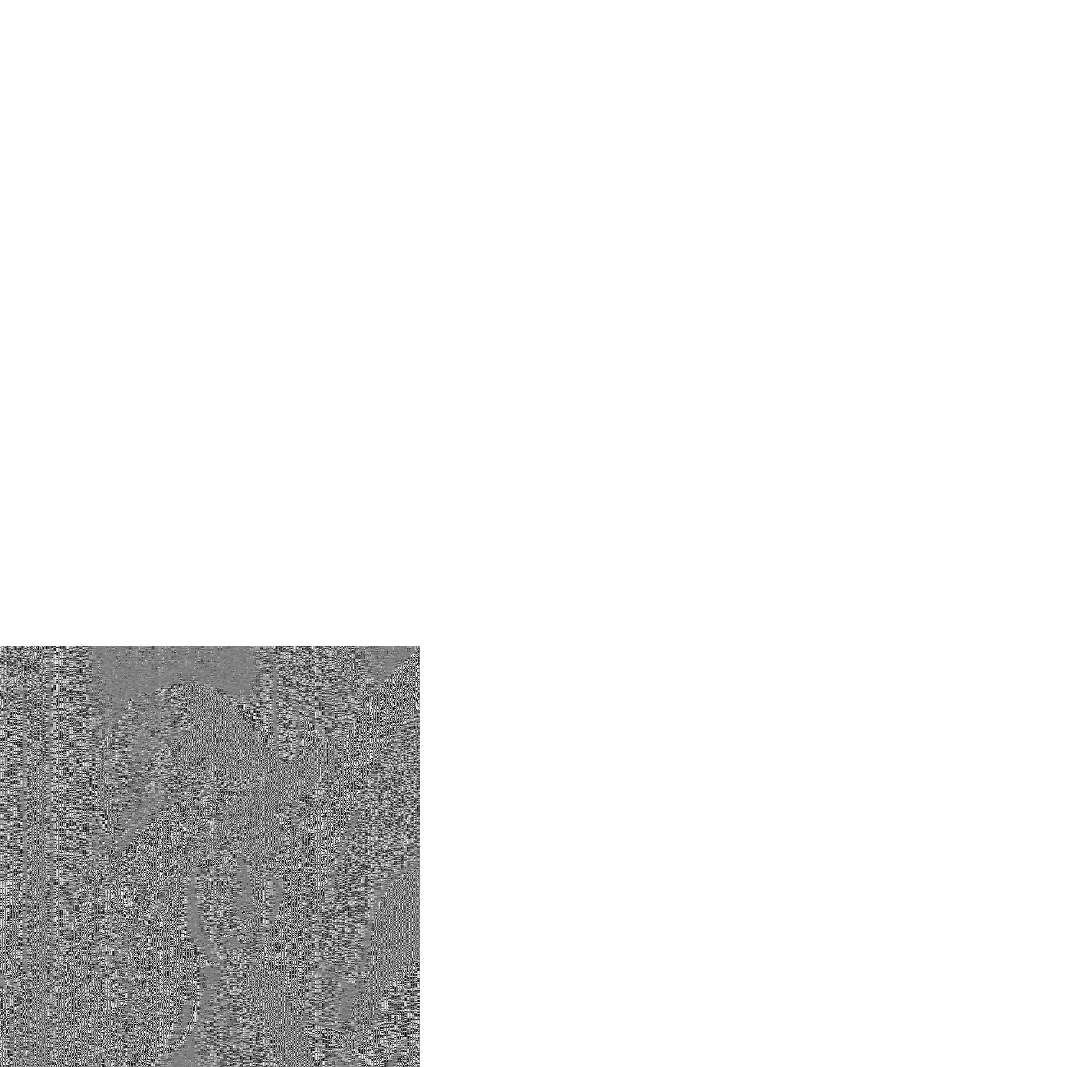}
a)
\end{minipage}
\begin{minipage}{\figwidth}
\centering
\includegraphics[width=\textwidth]{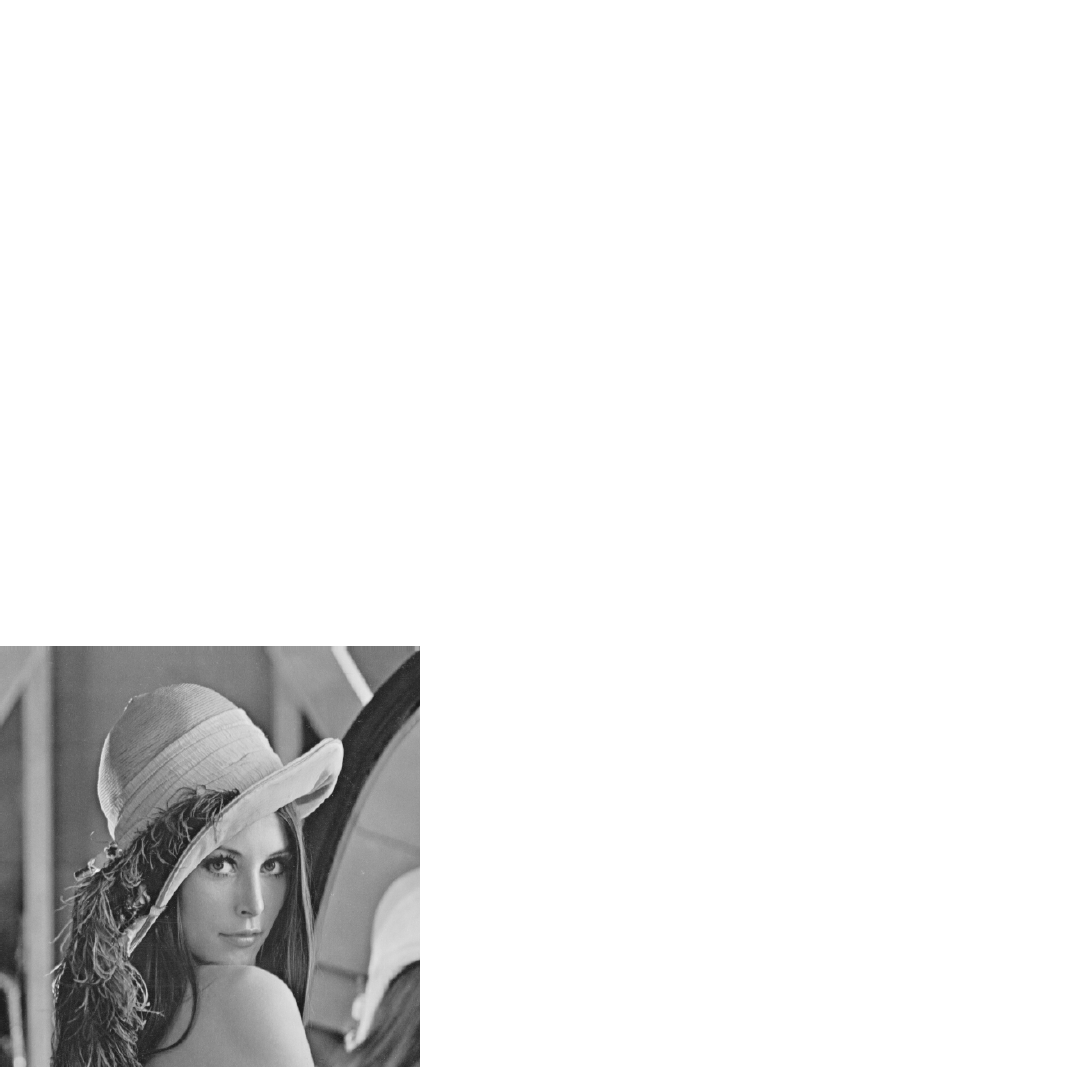}
b)
\end{minipage}
\caption{The decryption result of another cipher-image encrypted with the same secret key: a) cipher-image; b)
decrypted plain-image.}
\label{fig:DecryptedLenna}
\end{figure}

\subsection{Chosen-plaintext attack}

Chosen-plaintext attack is an enhanced version of known-plaintext attack, where the plaintext can be chosen arbitrarily to
obtain the information about the secret key in a more efficient way. In this subsection, the chosen-plaintext attack on MCKBA/HCKBA is briefly
introduced due to the following two points: 1) the known-plaintext attack on MCKBA/HCKBA works well in a relatively high probability and the chosen-plaintext
version can improve its performance a little; 2) the underlying theorem supporting the attack proposed in \cite[Theorem 1]{LCQ:MCKBA:IJBC11} is not right and corrected in Proposition~3.

\begin{proposition}
Assume that $\alpha, \beta, x$ are all $n$-bit integers, then a lower bound on the number of queries $(\alpha, \beta)$ to solve
Eq.~(\ref{eq:essentialfunction}) in terms of modulo $2^{n-1}$ for any $x$ is 1 if $n=2$; 2 if $n>2$.
\end{proposition}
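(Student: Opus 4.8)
The plan is to establish that the stated numbers are the \emph{exact} minimum, proving separately that they suffice (an explicit query set recovering $x\bmod 2^{n-1}$ for every $x$) and that nothing smaller does (a collision for every shorter query set). The natural setting is the carry reformulation of Property~1. Fixing a query $(\alpha,\beta)$ and writing $u=\alpha\dotplus x$ and $\delta=(\beta-\alpha)\bmod 2^n$, the response collapses to $y=u\oplus(u\dotplus\delta)$, so that $y_i=\delta_i\oplus\gamma_i$, where $\gamma_0=0$ and $\gamma_{i+1}$ is the carry $\mathrm{maj}(u_i,\delta_i,\gamma_i)$ of Eq.~(\ref{eq:getcarrybit}). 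Since $\alpha$ is fixed, $x\equiv x'\pmod{2^{n-1}}$ iff $u\equiv u'\pmod{2^{n-1}}$, so the target is to pin down $u_0,\dots,u_{n-2}$. The one fact driving everything, which merely repackages Properties~2--5, is this: from $y$ one learns exactly $\gamma_1,\dots,\gamma_{n-1}$, and for $i\le n-2$ the bit $u_i$ is forced iff $\gamma_i\neq\delta_i$ (then $u_i=\gamma_{i+1}=y_{i+1}\oplus\delta_{i+1}$); when $\gamma_i=\delta_i$ the chain \emph{freezes}, $\gamma_{i+1}=\delta_i$ independently of $u_i$, and $u_i$ is free.

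For the lower bound the case $n=2$ is immediate (no query yields no information). For $n\ge 3$ I would show that \emph{every} single query admits two inputs, incongruent modulo $2^{n-1}$, with the same response. Given $\delta$, if $\delta_0=0$ the chain freezes already at bit $0$ (here $\gamma_0=\delta_0=0$), so $u_0$ is free; if $\delta_0=1$ then $\gamma_1=u_0$, and choosing $u_0=\delta_1$ freezes the chain at bit $1$, leaving $u_1$ free. In either case flipping the free bit $u_i$ (with $i\in\{0,1\}\subseteq\{0,\dots,n-2\}$ since $n\ge 3$) leaves all higher carries, hence $y$, unchanged, producing the required collision. This forces at least two queries whenever $n>2$.

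For the upper bound, one query with $\delta_0=1$ already yields $u_0=y_1\oplus\delta_1$, settling $n=2$. For $n>2$ I would use the two queries $(\alpha,\beta)=(0,\delta^{(1)})$ and $(0,\delta^{(2)})$ with complementary alternating differences $\delta^{(1)}=\dots0101_2$ (that is, $\delta^{(1)}_i=1$ exactly when $i$ is even) and $\delta^{(2)}=\overline{\delta^{(1)}}$, and recover the bits least-significant-first. Having fixed $u_0,\dots,u_{i-1}$ one computes both carries $\gamma^{(1)}_i,\gamma^{(2)}_i$ exactly, and $u_i$ fails to be determined only in the simultaneous-freeze configuration $(\gamma^{(1)}_i,\gamma^{(2)}_i)=(\delta^{(1)}_i,\delta^{(2)}_i)$, namely $(1,0)$ at even $i$ and $(0,1)$ at odd $i$. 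The crux is to prove, by induction on $i$ and uniformly over all $u$, the invariant that the set of reachable carry-pairs avoids $(1,0)$ at even $i$ and $(0,1)$ at odd $i$; a short case check on the two majority transitions (the six images of each of the three admissible states) closes the induction. Consequently the freeze is never simultaneous, every $u_i$ with $i\le n-2$ is recovered, and two queries suffice.

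The main obstacle is exactly this invariant for the upper bound. Unlike the single-query analysis, the two carry chains are driven by the same unknown $u$ but under different $\delta^{(j)}$, so one must argue uniformly over all $x$ that the chains never stall at the same index; the alternating/complementary choice is precisely what makes the forbidden pair unreachable at every position, and showing that this is an \emph{inductive} property (rather than checking exponentially many $x$) is the heart of the proof. A minor point to record is realizability: every target difference $\delta$ is produced by an admissible pair, e.g. $(\alpha,\beta)=(0,\delta)$, so the constructed queries are legitimate and the bound is attained.
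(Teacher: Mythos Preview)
Your argument is correct, and it is organized differently from the paper's. The paper works throughout with the pair of carry chains $(c_i,\tilde{c}_i)$ of Property~1 and with Table~1; its lower bound for $n>2$ is a brief remark that no single choice of $(\alpha_0,\beta_0)$ keeps the condition of Property~3 satisfied at bit~1 for every $x$, while its upper bound tracks the four-bit state $(c_i,\tilde{y}_i,c'_i,\tilde{y}'_i)$ under the two period-two queries of Corollary~1 via a state-diagram (Fig.~1) and checks that $1\in\{\tilde{y}_i,\tilde{y}'_i\}$ always holds. Your substitution $u=\alpha\dotplus x$, $\delta=(\beta-\alpha)\bmod 2^n$ collapses each query to a single carry chain, which buys two things: the lower bound becomes an explicit collision (a concrete free bit $u_i$ with $i\le 1$ for any $\delta$), and the upper bound reduces to a two-bit state $(\gamma^{(1)}_i,\gamma^{(2)}_i)$ whose forbidden value alternates with parity—an invariant you can close by a six-case check rather than by appeal to a figure. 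Your choice $\alpha=0$ in both queries is what makes $u=x$ common to the two chains, so the joint induction is legitimate; note that this differs from the paper's second query, where $\alpha'\neq 0$. Both routes reach the same bound, but yours is more self-contained and gives a sharper lower-bound argument.
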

\begin{proof}
When $n=2$, one can obtain $x_0=\tilde{y}_1$ by choosing $(\alpha_0, \beta_0)$ $=(1, 0)$.
When $n>2$, $\tilde{y}_1$ may be equal to zero or one no matter what $(\alpha_0, \beta_0)$ is, which means that
it is impossible to satisfy the condition of Property~3 for any $x$. So, we have to resort to another
query $(\alpha', \beta')$. Let $\alpha'_i, \beta'_i, y'_i, \tilde{y}'_{i}$ and $c'_i$ denote the counterparts of $\alpha_i, \beta_i, y_i, \tilde{y}_{i}$, and $c_i$
corresponding to $(\alpha', \beta')$. Given a set of $(\alpha_{i+k}, \beta_{i+k})$ and $(\alpha'_{i+k}, \beta'_{i+k})$, one can obtain
$(c_{i+k+1}, \tilde{y}_{i+k+1})$ and $(c'_{i+k+1}, \tilde{y}'_{i+k+1})$ from $(c_{i+k}, \tilde{y}_{i+k})$ and $(c'_{i+k}, \tilde{y}'_{i+k})$, respectively,
where $i, k$ are non-negative integers. Let arrows of plain head and ``V-back" head denote $x_{i+k}=0$ and $x_{i+k}=1$, respectively,
Fig.~\ref{fig:relationship} illustrates the mapping relationship between $(c_{i+k}, \tilde{y}_{i+k}, c'_{i+k}, \tilde{y}'_{i+k})$ and $(c_{i+k+1}, \tilde{y}_{i+k+1}, c'_{i+k+1}, \tilde{y}'_{i+k+1})$
for a given $(\alpha_{i+k}, \beta_{i+k}, \alpha'_{i+k}, \beta'_{i+k})$, where $k=0, 1$. Since $(c_{0}, \tilde{y}_{0}, c'_{0}, \tilde{y}'_{0})\equiv(0, 0, 0, 0)$,
the dashed arrows in Fig.~\ref{fig:relationship} describe operations of Eq.~(\ref{eq:essentialfunctionform}) in the two least significant bit planes corresponding to two sets of $(\alpha, \beta)$. Note that the data in the third column is exactly the same as the first one. Therefore, Fig.~\ref{fig:relationship} demonstrates operations of Eq.~(\ref{eq:essentialfunctionform})
under all different bit levels if the variable $i$ goes through $2\cdot t$, where $t=0\sim \lfloor n/2 \rfloor$ and $i+k\leq n-1$. Referring to Fig.~\ref{fig:relationship}, it can be
easily verified that $$1\in \{y_i, y'_i\}$$ is always satisfied, which means that $x_{i}$ can be derived from Table~1.
\end{proof}
\begin{figure}[!htb]
\centering
\begin{minipage}{1.5\figwidth}
\centering
\includegraphics[width=\textwidth]{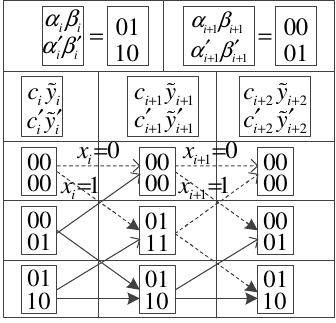}
a)
\end{minipage}
\caption{Relationship between $(c_{i+k}, \tilde{y}_{i+k},$ $c'_{i+k}, \tilde{y}'_{i+k})$ and $(c_{i+k+1}, \tilde{y}_{i+k+1},$ $c'_{i+k+1}, \tilde{y}'_{i+k+1})$ for
a given $(\alpha_{i+k},\beta_{i+k},$ $\alpha'_{i+k}, \beta'_{i+k})$, where $k=0, 1$.}
\label{fig:relationship}
\end{figure}

Under scenario of chosen-plaintext attack, one may make the plaintext satisfy that at least one pair of elements in $\{(J_1(k), J_2(k))\,|\,B(k)\in\{0, 1\}\}$ whose $i$-th bit plane satisfy the condition of Property~3. The same case exists for $\{(J_1(k), J_2(k))\,|\,B(k)\in\{2, 3\}\}$. The expected chosen-plaintext can be
obtained in a high probability by assigning $(J_1(k), J_2(k))$ with one of the two sets of number given in Corollary~\ref{coro:msb} randomly. Compared with
the known-plaintext attack, the chosen-plaintext attack has the following two superior performances:
1) the set $\{key1, key2\}$ can be reconstructed with much less complexity and much higher degree of accuracy;
2) the bits of $key(k)$ can be confirmed with a little higher probability.

\begin{corollary}
The $(n-1)$ least significant bits of $x$ in Eq.~(\ref{eq:essentialfunction}) can be determined easily
by setting $(\alpha, \beta)$ with the following two sets of numbers
\label{coro:msb}
\begin{eqnarray*}
\left\{ \left(\sum\nolimits_{j=0}^{\lceil n/2\rceil-1}(00)_2\cdot 4^j\right)\bmod 2^n, \left(\sum\nolimits_{j=0}^{\lceil n/2\rceil-1}(10)_2\cdot 4^j\right)\bmod 2^n\right\},\\
\left\{ \left(\sum\nolimits_{j=0}^{\lceil n/2\rceil-1}(10)_2\cdot 4^j\right)\bmod 2^n, \left(\sum\nolimits_{j=0}^{\lceil n/2\rceil-1}(01)_2\cdot 4^j\right)\bmod 2^n\right\},
\end{eqnarray*}
and checking the corresponding $\tilde{y}=y\oplus \alpha\oplus \beta$.
\end{corollary}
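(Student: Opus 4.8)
The plan is to specialize the generic two-query argument of Proposition~4 to the two explicit pairs stated here and then track carries bit by bit. First I would read off the bit patterns: the first pair has $\alpha=0$ and $\beta_i=1$ exactly at the odd positions, while the second pair has $\alpha'_i=1$ at the odd positions and $\beta'_i=1$ at the even positions, so that $\alpha'_i\oplus\beta'_i=1$ for every $i$. Writing $y_i=\tilde y_i\oplus\alpha_i\oplus\beta_i$ and $y'_i=\tilde y'_i\oplus\alpha'_i\oplus\beta'_i$, these patterns give $y_i=\tilde y_i$ on the even positions and $y_i=1-\tilde y_i$ on the odd positions of the first query, and $y'_i=1-\tilde y'_i$ on every position of the second query.

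The heart of the argument is the reduction ``$x_i$ is recoverable from a query if and only if that query's bit $y_i$ equals $1$''. I would establish this by evaluating the Table~1 index $4\alpha_i+2\beta_i+\tilde y_i$ in each of the four (parity, query) cases: for the first query it lies in $\{0,1\}$ on even positions and in $\{2,3\}$ on odd positions, while for the second query it lies in $\{2,3\}$ on even positions and in $\{4,5\}$ on odd positions. Comparing with Properties~2--5, the index falls in the recoverable block $\{1,2,4,7\}$ precisely when the corresponding $y$-bit is $1$, and in the useless block $\{0,3,5,6\}$ when it is $0$. Consequently $x_i$ can be extracted from the first query iff $y_i=1$ and from the second iff $y'_i=1$.

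It then remains to guarantee two things for every $i\le n-2$: that at least one of $y_i,y'_i$ equals $1$, and that whenever recovery uses an index in $\{2,4\}$ the required carry bit is already known. The first is exactly the covering claim $1\in\{y_i,y'_i\}$ proved in Proposition~4 via the state-transition diagram of Fig.~\ref{fig:relationship}, so I would simply invoke it. For the second, I would note that the first query has $\alpha=0$, so all of its carries vanish ($c_i\equiv0$), and the only index-$1$ (hence carry-free, Property~3) recoveries occur there; every index-$\{2,4\}$ recovery in the second query needs the carry $c'_i$ of $(\alpha'\dotplus x)$, which depends only on $x_0,\dots,x_{i-1}$ and $\alpha'_0,\dots,\alpha'_{i-1}$.

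Hence I would finish by induction from the least significant bit upward. The base case $i=0$ uses $c_0=c'_0=0$ together with $1\in\{y_0,y'_0\}$ to fix $x_0$. For the inductive step, once $x_0,\dots,x_{i-1}$ are known the carry $c'_i$ is computable, so whichever of the two queries has its $y$-bit equal to $1$ yields $x_i$ by Property~3 or by Eq.~(\ref{eq:xi24}); letting $i$ run to $n-2$ recovers all $n-1$ least significant bits. The main obstacle I anticipate is not the algebra but pinning down the covering claim $1\in\{y_i,y'_i\}$ for these particular alternating patterns and confirming that the induction is non-circular, i.e. that the carry $c'_i$ needed at step $i$ never depends on a bit not yet recovered; both are handled cleanly because $c'_i$ is a function of strictly lower-order bits only.
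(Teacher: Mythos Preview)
Your proposal is correct and considerably more detailed than the paper's own treatment: the paper's proof of this corollary reads in full ``The proof is straightforward and therefore omitted.'' You have effectively reconstructed the omitted argument, and your key observations---that $\alpha=0$ forces $c_i\equiv 0$ in the first query, that the Table~1 index lands in the recoverable block $\{1,2,4,7\}$ exactly when $y_i=1$, and that $c'_i$ depends only on strictly lower bits so the induction is non-circular---are all sound. The one point to be careful about is that your appeal to the covering claim $1\in\{y_i,y'_i\}$ relies on the two pairs here being the same pairs implicitly used in the state-transition diagram of Proposition~4; this is indeed the intent (the corollary is stated immediately after that proposition and its alternating bit patterns match the $k=0,1$ periodicity in Fig.~\ref{fig:relationship}), so the citation is legitimate, but a fully self-contained write-up would re-verify the diagram for these explicit $(\alpha_i,\beta_i,\alpha'_i,\beta'_i)$ values rather than assume the match.
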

\begin{proof}
The proof is straightforward and therefore omitted.
\end{proof}

\section{Conclusion}

In this paper, the security of the image encryption algorithm MCKBA/HCKBA has been restudied
in detail. Based on some properties of a composite function composed of the modulo addition
and the XOR operation, a known-plaintext attack and an improved chosen-plaintext attack were provided
to determine an equivalent secret key of MCKBA/HCKBA. The cryptanalysis provided in
this paper sheds some light on breaking other encryption schemes based on multiple combination
of the modulo addition and XOR operations.

\section*{Acknowledgement}

This research was supported by the National Natural Science Foundation of China (No.~61100216), Scientific Research Fund of Hunan Provincial Education Department (No.~11B124), and Research Fund of Xiangtan University (No.~2011XZX16).

\bibliographystyle{ws-ijbc}
\bibliography{EMCKBA}
\end{document}